\newcommand{\numberset}{\mathbb}
\newcommand{\N}{\numberset{N}}
\newcommand{\R}{\numberset{R}}
\newcommand{\E}{\numberset{E}}
\newcommand{\PP}{\numberset{P}}
\newcommand{\Q}{\numberset{Q}}
\newcommand{\LL}{\mathcal{L}}
\theoremstyle{plain}
\newtheorem{theorem}{Theorem}[section]
\theoremstyle{definition}
\theoremstyle{remark}
\newtheorem{remark}{Remark}
\begin{document}

\title{Multinomial method for option pricing under Variance Gamma}

\author{
\name{Nicola Cantarutti\textsuperscript{$\dagger$}\thanks{Nicola Cantarutti. Email: nicolacantarutti@gmail.com} and João Guerra\textsuperscript{$\dagger$}}
\affil{\textsuperscript{$\dagger$}CEMAPRE - Center for Applied Mathematics and Economics - ISEG - University of Lisbon.} }

\maketitle

\begin{abstract}
This paper presents a multinomial method for option pricing
when the underlying asset follows an exponential Variance Gamma process.
The continuous time Variance Gamma process is approximated by a continuous time process with the same first four cumulants, and then discretized in time and space. 
This approach is particularly convenient for pricing American and Bermudan options, 
which can be exercised before the expiration date.
Numerical computations of European and American options are presented, and compared with results obtained with finite differences methods and with the Black-Scholes
prices. 
\end{abstract}

\begin{keywords}
American option, Lévy processes, Moment Matching, Multinomial tree, Variance Gamma.
\end{keywords}

\section{Introduction}
 
Since the early nineties, a lot of research has been done on the topic of pure jump Lévy processes 
to describe the dynamics of the asset returns. The main contributions are
\cite{BN98}, \cite{EbKe95}, \cite{GeMaYo98}, \cite{MaSe90}.

Lévy processes are stochastic processes with independent and stationary
increments that have nice analytical properties and 
reproduce quite well the statistical features of the financial data (see for instance \cite{Ait12} and \cite{Cont}).
In Figure \ref{FigPDF} we present as examples the histograms of the
daily log-returns for four of the major indices:  
the S\&P 500 Stock Index, 
the KOSPI (Korea Composite Stock Price Index), 
XAO (All Ordinaries Australian Index)  
and TAIEX (Taiwan Capitalization weighted Stock Index).
In the pictures we show the Normal and Variance Gamma (VG) densities fitted with the market data. 
It is straightforward to check that the 
VG density reproduces much better the high peaks near the origin, and the heavy tails feature.

The \emph{Variance Gamma} process is a pure jump Lévy process with infinite activity.
This means that when the magnitude of the jumps becomes infinitesimally small, the arrival rate of jumps tends to infinity.
The first complete presentation of financial applications of the symmetric VG model is given in \cite{MaSe90} 
where, with respect to the Gaussian model, an additional parameter is introduced in order to control the kurtosis 
(the skewness is not considered).
The authors model the log-returns with a driftless Brownian motion whose variance is Gamma distributed. 
This is the origin of the name ``Variance Gamma''.
\begin{figure}[t!]
 \centering
 \includegraphics[width=0.47\textwidth]{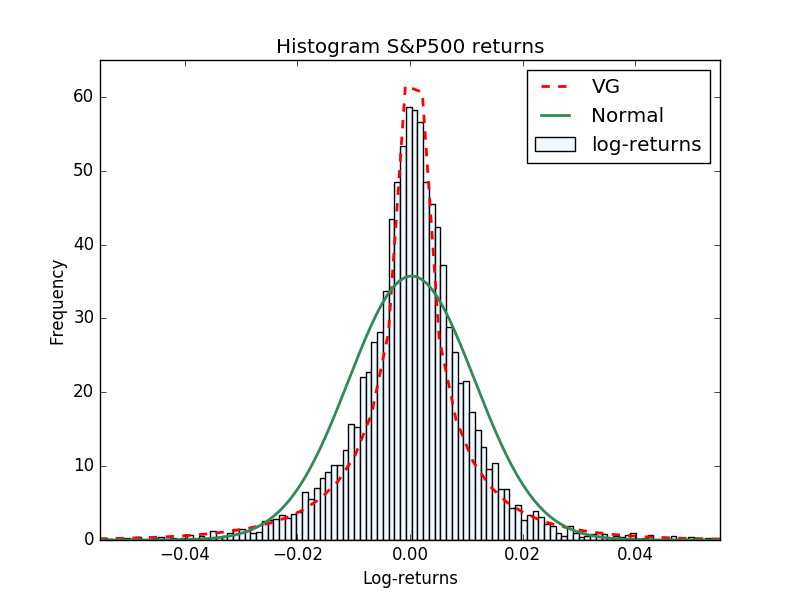}
 ~
 \includegraphics[width=0.47\textwidth]{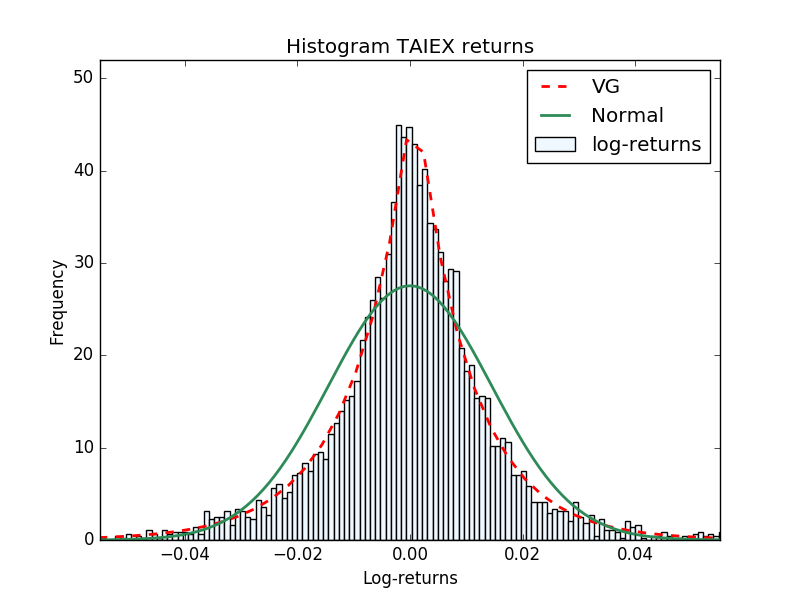}
 ~
 \includegraphics[width=0.47\textwidth]{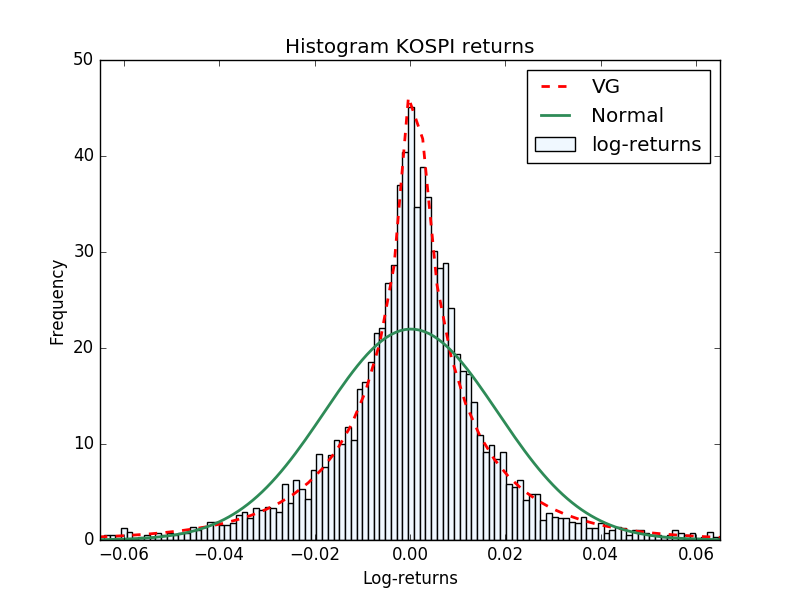}
 ~
 \includegraphics[width=0.47\textwidth]{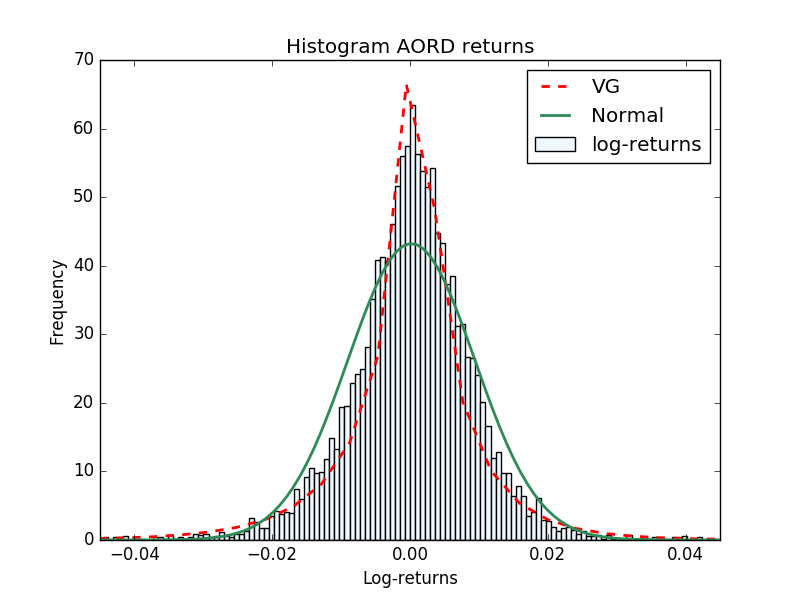}
 \caption{Histograms of daily log-returns for S\&P500, KOSPI, XAO and TAIEX. The dashed line corresponds to the VG density (\ref{pdf_VG}). 
 The continuous line is the normal density. The parameters for both densities are obtained by the \emph{method of moments} (details in \cite{Se04}).}
 \label{FigPDF}
\end{figure} 

It is possible to give two representations to the VG process. In the first one,
the VG process is obtained by time changing a Brownian motion with drift:
the Brownian motion is evaluated at random times that are Gamma distributed.
The economic interpretation is that the trading relevant times are indeed random. 
The non-symmetric VG process is described in \cite{MCC98}, 
where the authors also present an explicit form of the density function, 
and a closed formula for the price of a vanilla European option.
The authors consider a Brownian motion with a non-zero drift, and this additional parameter allows to control the skewness as well.

The VG process 
has an infinite number of jumps in any time interval and, unlike the Brownian motion, it does not have a continuous martingale component.
Another important difference from the Brownian motion is that the VG process has finite variation, therefore the sum of the
absolute value of the increments in any time interval converges.
This fact can be derived easily from the second representation for VG processes: every VG can be represented as the difference of two 
(finite variation) Gamma processes.
The proof can be found in \cite{MCC98}, where the authors show that the two representations are equivalent, 
and also derive the VG characteristic function from the product
of two Gamma characteristic functions. 
The second representation has an interesting economical interpretation:
it can be seen as the difference of gains and losses. The Gamma
processes are always increasing, therefore this representation is coherent with independent gains and losses processes.

The VG process was first presented in the context of option pricing 
in \cite{MaMi91}, where it has been used for pricing European options.
European vanilla options can be easily priced by the analytical formula presented in \cite{MCC98} and exotics
can be priced numerically by several techniques:
Monte Carlo methods for VG are presented in \cite{Fu00}. 
A finite difference scheme for the VG Partial Integro-Differential Equation (PIDE) 
is described in \cite{CoVo05b}. In \cite{CaMa98}, the authors show how to price options by using a Fourier transform approach.
The problem for American options is considered in \cite{Al05}, \cite{Oo05} and \cite{HiMa01}, where the authors present different finite difference 
schemes to solve the American VG PIDE.

The tree method was first introduced by \cite{CRR79} for a market where the log-price can change only in two different ways: 
an upward jump, or a downward jump. For this reason the model is called \emph{binomial model}. The authors 
prove that when the number of time steps goes to infinity, the discrete random walk of the log-price converges to the Brownian motion
and the option price converges to the Black-Scholes price (\cite{BS73}).
The \emph{multinomial model} is a generalization of the binomial model, and at each time step it considers 
more than just two possible future states.
A general multinomial method for pricing European and American options under exponential L\'evy processes is described in \cite{MaSoSz}. 
In \cite{KeWe06} the authors consider a multinomial method for general exponential L\'evy processes based on the moment matching condition.
Other methods based on the moment matching condition are for instance \cite{HaMac10}, with applications to the Normal Inverse Gaussian process, and 
\cite{See13} with applications to the VG process.  
In the present work we consider a multinomial discretization based on the cumulant matching condition as explained in \cite{YaPr01}, \cite{YaPr03} and \cite{YaPr06}.

In Section \ref{sec2} we present the basic features of Lévy processes, in particular finite variation processes. 
The VG process and exponential VG are introduced in the successive subsections. 
A short summary of
some useful concepts such as integration with respect to the Poisson random measure, and the relation between the Lévy symbol and the cumulants are collected in 
the Appendices \ref{App1} and \ref{App2}. 
In Section \ref{sec3} we review the construction of the multinomial tree, following the method of
moment matching proposed in \cite{YaPr01}. We prove that the multinomial tree converges to the continuous time jump process that we have
introduced to approximate the VG process.
In Section \ref{sec4}, which is the most important of the paper, we describe the algorithm for pricing options with the 
multinomial method and show the numerical results for European and American options.
In Section \ref{sec5} we present a topic that deserves further research. 
We show how to obtain the parameters of the discrete time Markov chain that approximates the jump process, by discretizing its infinitesimal generator.
However, with this method the transition probabilities are not always positive. These probabilities, in general, are different from the probabilities obtained by
the moment matching condition, but for a particular choice of the parameters we argue that the two must coincide. This topic needs to be further investigated.
In Section \ref{sec6} we present the conclusions.

\section{Lévy processes}\label{sec2}

Let $L_t$ be a stochastic process defined on a probability space $(\Omega,\mathcal{F},(\mathcal{F}_{t \ge 0}),\PP)$,
$L_t$ is said to be a \emph{Lévy process} if it satisfies the three properties:
\begin{enumerate}
 \item $L_{0}=0$.
 \item $L_t$ has independent and stationary increments.
 \item $L_t$ is stochastically continuous: 
 $\forall \epsilon,t > 0 \; \; \lim_{h\to 0} \PP \bigl( |L_{t+h}-L_t|>\epsilon \bigr)=0.  $
\end{enumerate}
The characteristic function of a Lévy process $L_t$ has the Lévy-Khintchine representation:
\begin{align}\label{LevyKin}
\phi_{L_t}(u)  &= \mathbb{E} [e^{iu L_t}]  \\ 
	   &= e^{t \eta(u)} \nonumber \\
	   &= \exp \left[t \biggl( ibu - \frac{1}{2} \tilde \sigma^2 u^2 + \int_{\R} 
	   \bigl( e^{i u x} -1 -iux \mathbbm{1}_{\{|x|<1\}} \bigr) \nu(dx) \biggr) \right], \nonumber		      
\end{align}
where $\eta(u)$ is called \emph{Lévy symbol}, $b\in \R$ and $\tilde \sigma \geq 0$ are constants\footnote{The diffusion coefficient is usually called $\sigma$.
Here we use $\tilde \sigma$ because $\sigma$ will be used for the VG process.} 
and $\nu(dx)$ is the \emph{Lévy measure}
which satisfies:
\begin{equation}\label{Levy_m}
 \nu (\{ 0 \} ) = 0, \hspace{2em}
 \int_{\R} (1\wedge x^2) \nu(dx) < \infty.
\end{equation}
The \emph{Lévy triplet} $(b,\tilde \sigma,\nu)$ completely characterizes a Lévy process.
Every Lévy process can be written as the superposition of a drift, a Brownian motion and two pure jump processes.
This is the so called \emph{Lévy-It\={o} decomposition}: 
\begin{equation}\label{Levy_Ito}
  dL_t = bdt + \tilde \sigma dW_t + \int_{|x|\geq1} x N(dt,dx) + \int_{|x|<1} x \tilde{N}(dt,dx),
\end{equation}
where $W_t$ is a standard Brownian motion, 
$N(dt,dx)$ and $\tilde N(dt,dx)$ are the Poisson random measure and the compensated Poisson random measure (see Appendix \ref{App1}).

We are interested in particular in processes with finite variation and finite moments. 
We see that the Lévy measure contains all the
information we need:
\begin{itemize}
 \item A Lévy process with triplet $(b,\tilde \sigma,\nu)$ is of finite variation if and only if
 \begin{equation}\label{fin_variation} 
  \tilde \sigma = 0 \hspace{1 em}\mbox{and} \hspace{1 em} \int_{|x|<1} |x| \nu(dx) < \infty. 
 \end{equation} 
 \item A Lévy process has a finite moment of order $n$, $\E[X_t^n] < \infty$, if and only if
 \begin{equation}\label{moment_condition}
  \int_{|x|\geq 1} |x|^n \nu(dx) < \infty.
 \end{equation}
\end{itemize}
For a proof see \cite{Applebaum}, Theorem 2.4.25 and Theorem 2.5.2.
For processes with finite variation, the truncator term in (\ref{LevyKin}) can be absorbed in the parameter $b' = b - \int_{|x|<1} x \nu(dx)$.
It is easy to verify that every finite variation Lévy process can be represented as an integral of a Poisson random measure:
\begin{equation}\label{Levy_Ito_Poisson}
 L_t = b' t + \int_{\R \backslash \{0\}} x N(t,dx),
\end{equation}
The Lévy symbol is:
\begin{equation}\label{Character_Poisson}
 \eta(u) = ib'u + \int_{\R} (e^{iux} -1) \nu(dx). 
\end{equation}

\subsection{The Variance Gamma process}
The VG process is obtained by time changing a Brownian motion with drift. The new time variable is a stochastic process 
$T_t \sim \Gamma(\mu t,\kappa t)$ with density\footnote{Usually the Gamma distribution is 
parametrized by a shape and scale positive parameters $T \sim \Gamma(\rho,\zeta)$. The Gamma process 
$T_t \sim \Gamma(\rho t,\zeta)$ has pdf 
$f_{T_t}(x) = \frac{\zeta^{-\rho t}}{\Gamma(\rho t)}x^{\rho t -1}e^{-\frac{x}{\zeta}}$ and has moments $\E[T_t]=\rho \zeta t$ 
and $\mbox{Var}[T_t] = \rho \zeta^2 t$. Here we use a parametrization as in \cite{MCC98} such that $\E[T_t]=\mu t$ and $\mbox{Var}
[T_t] = \kappa t$, so $\zeta=\frac{\kappa}{\mu}$, $\rho=\frac{\mu^2}{\kappa}$.}:
\begin{equation}
 f_{T_t}(x)= \frac{(\frac{\mu}{\kappa})^{\frac{\mu^2 t}{\kappa}}}{\Gamma(\frac{\mu^2 t}{\kappa})}x^{\frac{\mu^2 t}{\kappa} -1}
 e^{-\frac{\mu x}{\kappa}} \hspace{2em} x \geq 0.
\end{equation}
The Gamma process $T_t$ is a \emph{subordinator}. A subordinator is a one dimensional Lévy process that is 
non-decreasing almost surely. Therefore it is suitable to represent a time variable. 
It is possible to prove that every subordinator is a finite variation process (see \cite{Applebaum}).  

Considering a Brownian motion with drift $X_t = \theta t + \sigma W_t$, with $W_t \sim \mathcal{N}(0,t)$, let's replace the time variable by the Gamma 
subordinator $T_t \sim \Gamma(t,\kappa t)$ (with $\mu=1$).
We obtain the \emph{Variance Gamma} process:
\begin{equation}\label{VG_process}
 X_{t} = \theta T_t + \sigma W_{T_t} .
\end{equation}
It depends on three parameters:
\begin{itemize}
 \item $\theta$, the drift of the Brownian motion,	
 \item $\sigma$, the volatility of the Brownian motion,
 \item $\kappa$, the variance of the Gamma process.
\end{itemize}
The probability density function of $X_t$ can be computed conditioning on the realization of $T_t$:
\begin{align}\label{pdf_VG}
 f_{X_t}(x) &= \int f_{X_t,T_t}(x,y) dy = \int f_{X_t|T_t}(x|y) f_{T_t}(y) dy \\ \nonumber
         &= \int_0^{\infty} \frac{1}{\sigma \sqrt{2\pi y}} e^{-\frac{(x -\theta y)^2}{2\sigma^2 y}}
         \frac{y^{\frac{t}{\kappa} -1}}{\kappa^{\frac{t}{\kappa}} \Gamma(\frac{t}{\kappa})}
          e^{-\frac{y}{\kappa}} \, dy \\ \nonumber
         &= \frac{2 \exp(\frac{\theta x}{\sigma^2})}{\kappa^{\frac{t}{\kappa}} \sqrt{2\pi}\sigma \Gamma(\frac{t}{\kappa}) }
            \biggl( \frac{x^2}{2\frac{\sigma^2}{\kappa} + \theta^2} \biggr)^{\frac{t}{2\kappa}-\frac{1}{4}} 
            K_{\frac{t}{\kappa}-\frac{1}{2}} 
            \biggl( \frac{1}{\sigma^2} \sqrt{x^2 \bigl(\frac{2\sigma^2}{\kappa}+\theta^2 \bigr)} \biggr),
\end{align}
where the function $K$ is a modified Bessel function of the second kind (see \cite{MCC98} for the computations).
The characteristic function can be obtained by conditioning too: 
\begin{align}
 \phi_{X_t}(u) &= \biggl( 1-i \kappa \bigl( u\theta +\frac{i}{2}\sigma^2 u^2 \bigr) \biggr)^{-\frac{t}{\kappa}} \\  
	       &= \biggl( 1-i\theta \kappa u + \frac{1}{2} \sigma^2 \kappa u^2 \biggr)^{-\frac{t}{\kappa}},
\end{align}
see proposition 1.3.27 in \cite{Applebaum}. 
The Lévy symbol is:
\begin{equation}\label{Levy_exp_VG}
 \eta(u) = -\frac{1}{\kappa} \log(1-i\theta \kappa u + \frac{1}{2} \sigma^2 \kappa u^2).
\end{equation}
Using the formula (\ref{cumulants}) in Appendix \ref{App2} for the cumulants we derive:
\begin{align}\label{VG_cumulants}
 c_1 &= t\theta \\ \nonumber
 c_2 &= t(\sigma^2 + \theta^2 \kappa) \\ \nonumber
 c_3 &= t (2\theta^3\kappa^2 + 3 \sigma^2 \theta \kappa) \\ \nonumber
 c_4 &= t (3\sigma^4 \kappa + 12\sigma^2 \theta^2 \kappa^2 +6\theta^4\kappa^3)\nonumber . 
\end{align}
The VG Lévy measure is\footnote{In \cite{MCC98} the authors derive the expression for the Lévy measure 
by representing the VG process as the difference of two Gamma processes.}:
\begin{equation}\label{VG_measure}
 \nu(dx) = \frac{e^{\frac{\theta x}{\sigma^2}}}{\kappa|x|} \exp 
 \left( - \frac{\sqrt{\frac{2}{\kappa} + \frac{\theta^2}{\sigma^2}}}{\sigma} |x|\right) dx,
\end{equation}
and it satisfies conditions (\ref{fin_variation}) and (\ref{moment_condition}). 
The VG process can be represented as a 
pure jump process as in (\ref{Levy_Ito_Poisson}) and (\ref{Character_Poisson}), with no additional drift $b'=0$.
\begin{equation}\label{Poisson_VG}
 X_t = \int_{\R \backslash \{0\}} x N(t,dx).
\end{equation}
All the informations are contained in the Lévy measure (\ref{VG_measure}),
which completely describes the process. Even if the process has been created by Brownian
subordination, it has no diffusion component.  
The L\'evy triplet is $( \int_{|x|<1} x \nu(dx), 0, \nu)$.
Using the formalism of Poisson integrals in Appendix \ref{App1}, the Lévy symbol (\ref{Levy_exp_VG}) 
has the representation\footnote{See Example 8.10 in \cite{Sato}.}:
\begin{equation}
 \eta(u) =\int_{\R} (e^{iux} -1) \nu(dx). 
\end{equation}

\subsection{Exponential VG model}

Under the risk neutral measure $\Q$, the dynamics of the stock price is described by an \emph{exponential Lévy model}:
\begin{equation}
 S_t = S_0 e^{rt + L_t},
\end{equation}
where $r$ is the risk free interest rate, and $L_t$ is a general Lévy process.
Under $\Q$, the discounted price is a $\Q$-martingale:
\begin{equation}
 \E^{\Q} \bigl[ S_te^{-rt} \bigr| S_0 \bigr] =  \E^{\Q} \bigl[ S_0e^{L_t} \bigr| S_0 \bigr] = S_0, 
\end{equation}
and thus $\E^{\Q}[ e^{L_t} | L_0=0] = 1 $.
The condition for the existence of the exponential moment $\E[ e^{L_t}] < \infty$ is equivalent to:  
\begin{equation}
 \int_{|x|>1} e^x \nu(dx) < \infty,
\end{equation}
as proved in Lemma 25.7 in \cite{Sato}. The VG process $X_t$ has finite exponential moment.
In order to satisfy the martingale condition\footnote{
To obtain the correction term $\omega$ we have to find the exponential moment of $X_t$ using its characteristic function: 
\begin{align*}
\E[ e^{X_t}] = \phi_{X_t}(-i) = e^{-\omega t}
 \end{align*}.} we need to add a correction term to $X_t$. 
The following process is a martingale:
\begin{equation}\label{S_pprocess}
 e^{-rt}S_t = S_0 e^{\omega t + X_t}.
\end{equation}
where $w = \frac{1}{\kappa} \log(1-\theta \kappa -\frac{1}{2}\sigma^2 \kappa)$.
Passing to the log-price $Y_t = \log(S_t)$, we get a process in the form of Eq. (\ref{Levy_Ito_Poisson}) with $b' = r + \omega$:
\begin{equation}\label{log_process}
 Y_t = Y_0 + (r+\omega)t + \int_{\R \backslash \{0\}} x N(t,dx).
\end{equation}
Let $V(t,Y_t)$ be the value of an option at time $t$. 
We assume that $V(t,y) \in C^{1,1}([t_0,T],\R)$ and has a polynomial growth at infinity.\\ 
By the \emph{martingale pricing theory}, the discounted price of the option
is a martingale and it is possible to derive the PIDE for the price of the option:
\begin{equation}
 \E^{\Q}\biggl[ d\bigl( e^{-rt}V(t,Y_t) \bigr)\biggr] = \frac{\partial V(t,y)}{\partial t} + \LL^{Y_t} V(t,y) -r V(t,y) = 0,   
\end{equation}
where $\LL^{Y_t}$ is the infinitesimal generator of the log-price process (\ref{log_process}). 
The resulting PIDE is:
\begin{equation} \label{VG_PIDE}
 \frac{\partial V(t,y)}{\partial t} + (r+\omega) \frac{\partial V(t,y)}{\partial y}
 + \int_{\R \backslash \{0\}} \bigl[ V(t,y+x) - V(t,y) \bigr] \nu(dx) = rV(t,y) .
\end{equation}

\section{The multinomial method} \label{sec3}
In this section we introduce the multinomial method proposed in \cite{YaPr06}. 
The stock price is considered as a Markov chain with $L$ possible future states at each time. 
In this setting, the time $t \in [t_0,T]$ is discretized as $t_n = t_0 + n\Delta t$ for $n=0, ... ,N$ and 
$\Delta t = (T-t_0)/N$. We denote the stock price at time $t_n$ as $S(t_n) = S_n$.

Consider the up/down factors $u>d>0$ and write the discrete evolution of the stock price $S_n$ as:
\begin{equation}\label{Discr_S}
  S_{n+1} = u^{L-l}d^{l-1} S_n  \hspace{3em} l=1, ... , L  
\end{equation}
where each future state has transition probability $p_l$, satisfying $\sum_{l=1}^L p_l = 1$.
The value of the stock at time $t_n$ can assume $j \in [1,...,n(L-1)+1]$ possible values:
\begin{equation}\label{Discr_S2}
  S_{n}^{(j)} = u^{n(L-l)+1-j}d^{j-1} S_0.  
\end{equation}
The multinomial tree is recombining if for a constant $c>1$, $u/d = c$.
Regarding the present work, we only consider five branches, $L=5$. As explained in \cite{YaPr06}, this number of branches is 
enough to model the features of a stochastic process up to its fourth moment.

\subsection{Moment matching}

To determine the parameters of the Markov chain we require that its local moments are equal to that of the continuous process.
First, we rewrite the continuous process (\ref{log_process}) as the sum of a drift term and a martingale term:
\begin{align}\label{log_martingale}
 Y_{t+\Delta t}-Y_t &= (r+\omega)\Delta t + \int_{\R \backslash \{0\}} x N(\Delta t,dx) \\ \nonumber
		    &= (r+\omega + \theta)\Delta t + \int_{\R \backslash \{0\}} x \tilde N(\Delta t,dx)
\end{align}
where $\theta = \int_{\R \backslash \{0\}} x \nu(dx) = \E \bigl[ \int_{\R \backslash \{0\}} x N(1,dx) \bigr]$ is the expected value of the VG process in \ref{Poisson_VG}, 
when $\Delta t=1$. 
The integral with respect to the compensated Poisson measure $\tilde N(\Delta t,dx)$ is a martingale (see Appendix \ref{App1}).

We can pass to log-prices $Y_n = \log(S_n)$ in the discrete Eq. (\ref{Discr_S}), and write it as the sum of a drift component and a 
random variable with $L$ possible outcomes:
\begin{align}\label{Discr_Y}
 \Delta Y = Y_{n+1} - Y_n &= (L-l) \log(u) + (l-1) \log(d) \\ \nonumber
  &= \bar b\, \Delta t + (L-2l+1) \alpha(\Delta t).
\end{align}
The term $\bar b\, \Delta t$ is the drift term, while $l$ is a random variable that assumes values in $[1,2,...,L]$ with probability $p_l$. 
It has to satisfy the martingale condition:
$$ \E \bigl[(L-2l+1) \alpha(\Delta t) \bigr] = \alpha(\Delta t) \sum_{l=1}^L p_l (L-2l+1) = 0, $$
with $\alpha(\Delta t)$ a function of $\Delta t$.

The corresponding up/down factors have the following representation:
\begin{equation}\label{updown}
 u = \exp\biggl( \frac{b}{L-1} + \alpha(\Delta t) \biggr) \hspace{2em}  d = \exp\biggl( \frac{b}{L-1} - \alpha(\Delta t) \biggr),
\end{equation}
and we can readily see that if $u/d$ is constant the tree recombines.

Given the mean $c_1 = \E[\Delta Y] = \bar b \Delta t$, the $k$-central moment is:
\begin{equation}\label{higher_moments}
 \E \bigl[ (\Delta Y - c_1)^k \bigr] = \alpha(\Delta t)^k\, \E \bigl[ (L-2l+1)^{k} \bigr].
\end{equation}
The moment matching condition requires that the central moments of the discrete process (\ref{Discr_Y}) 
are equal to the central moments
of the continuous process (\ref{log_martingale}):
\begin{equation}\label{moment_matching}
 \alpha(\Delta t)^k\, \E \bigl[ (L-2l+1)^{k} \bigr] = \mu_k.
\end{equation}
We fix $L=5$, and using the relation between central moments and cumulants (Eq. (\ref{moment_cumulants}) in Appendix \ref{App2}) 
we solve the linear system of equations for the transition probabilities:
\begin{align}\label{probabilities1}
 p_1 &= \frac{1}{196 \alpha(t)^4} \biggl[ \frac{3}{2} c_2^2 -2 c_2\alpha(t)^2 + 2 c_3 \alpha(t) +\frac{1}{2} c_4  \biggr] \\ \nonumber
 p_2 &= \frac{1}{196 \alpha(t)^4} \biggl[ -6 c_2 + 32c_2 \alpha(t)^2 - 4c_3 \alpha(t) -2 c_4 \biggr] \\ \nonumber
 p_3 &=  1 + \frac{1}{196 \alpha(t)^4} \biggl[ 3c_4 + 9c_2^2 -60c_2 \alpha(t)^2   \biggr] \\ \nonumber
 p_4 &= \frac{1}{196 \alpha(t)^4} \biggl[ -6 c_2 + 32c_2 \alpha(t)^2 + 4c_3 \alpha(t) -2 c_4  \biggr] \\ \nonumber
 p_5 &= \frac{1}{196 \alpha(t)^4} \biggl[ \frac{3}{2} c_2^2 -2 c_2\alpha(t)^2 - 2 c_3 \alpha(t) +\frac{1}{2} c_4 \biggr].
\end{align}
The drift parameter is $\bar b = r+\omega + \theta$.
The only missing parameter to determine is $\alpha(\Delta t)$. This is a function of the time increment $\Delta t$ and can be determined using the 
higher order terms in the moment matching condition together with the condition of positive probabilities.

Recall that the well known binomial model \cite{CRR79} assumes the value $\alpha(\Delta t) = \tilde \sigma \sqrt{\Delta t}$,
that represents the volatility of the increments in the time interval $\Delta t$.
In the trinomial model, the parameter $\alpha(\Delta t)$ assumes value $\alpha(\Delta t) = \frac{1}{2} \tilde \sigma \sqrt{3\Delta t}$, see for instance \cite{YaPr01}.
For the multinomial method a good representation for $\alpha(\Delta t)$ is:
\begin{equation}\label{alphat}
 \alpha(\Delta t) = \sqrt{c_2} \sqrt{\frac{3+\bar \kappa}{12}},
\end{equation}
where $\bar \kappa = c_4 / c_2^2$ is the excess of kurtosis\footnote{We use the bar over $\kappa$, 
to distinguish the kurtosis from the variance of the gamma process $\kappa$.}. 
We refer to \cite{YaPr06} for the derivation.
This choice guarantees that the probabilities $p_i$ for $i=1...5$ are always positive and sum to one. We can replace the expression
(\ref{alphat}) inside (\ref{probabilities1}), to obtain the simpler form:
\begin{align}\label{probabilities2}
 & [p_1,p_2,p_3,p_4,p_5] \approx \biggl[ \frac{3+\bar \kappa+s\sqrt{9+3\bar \kappa}}{4(3+\bar \kappa)^2} , 
 \frac{3+\bar \kappa-s\sqrt{9+3\bar \kappa}}{2(3+\bar \kappa)^2} , \\ \nonumber
 &
 \frac{3+2\bar \kappa}{2(3+\bar \kappa)} ,
 \frac{3+\bar \kappa+s\sqrt{9+3\bar \kappa}}{2(3+\bar \kappa)^2} ,
 \frac{3+\bar \kappa-s\sqrt{9+3\bar \kappa}}{4(3+\bar \kappa)^2} \biggr],
\end{align}
where $s = c_3 / \sqrt{c_2^3}$ is the skewness.
\begin{remark}
 The standard deviation of every L\'{e}vy process with finite moments follows the square root rule. This means that the term $\alpha(\Delta t)$ has to be proportional
 to the square root of $\Delta t$. In the binomial and trinomial models, the proportionality constant is explicit, while for the pentanomial method it is implicit
 in the formula (\ref{alphat}). Expanding the formula using the expression (\ref{VG_cumulants}) for the cumulants, it is possible to check that the square root rule is
 satisfied at first order in $\sqrt{\Delta t}$.
\end{remark}

\subsection{Convergence}

We call a generic jump process (\ref{Levy_Ito_Poisson}) with first four cumulants $c_1$,$c_2$,$c_3$,$c_4$ equal to the VG cumulants (\ref{VG_cumulants}), 
the \emph{approximated process} $X^A$. 
The cumulant generating function of the increment $\Delta X^A$ has the following series representation (see Appendix (\ref{App2})):
\begin{equation}\label{cum_gen_appr}
 H_{\Delta X^{A}}(u) = ic_1 u -\frac{c_2u^2}{2} -\frac{ic_3u^3}{3!} + \frac{c_4u^4}{4!} + \mathcal{O}(u^5).
\end{equation}
We are interested in the approximation of a VG process with drift (\ref{log_martingale}), therefore we require that $c_1 = \bar b \Delta t = (r+\omega+\theta)\Delta t$. 
\begin{theorem}
The increments of the discrete Markov chain (\ref{Discr_Y}) and the increments of the approximated process $X^A$ have the same distribution by construction.
\end{theorem}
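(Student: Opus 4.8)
The cleanest route to proving that two random variables share a law is to show that their characteristic functions, equivalently their cumulant generating functions, coincide, and this is precisely the comparison that (\ref{cum_gen_appr}) invites. The plan is therefore to compute the cumulant generating function of the one-step increment $\Delta Y$ in (\ref{Discr_Y}) and match it, term by term, against $H_{\Delta X^{A}}(u)$. Since $\Delta Y$ is the discrete random variable taking the value $b\Delta t + (L-2l+1)\alpha(\Delta t)$ with probability $p_l$, its characteristic function factorises as $\phi_{\Delta Y}(u)=e^{iub\Delta t}\sum_{l=1}^{5}p_l\,e^{iu(L-2l+1)\alpha(\Delta t)}$, so that $H_{\Delta Y}(u)=iub\Delta t+\log\bigl(\sum_{l=1}^{5}p_l\,e^{iu(L-2l+1)\alpha(\Delta t)}\bigr)$. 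The linear term already reproduces the leading term of (\ref{cum_gen_appr}), since $c_1=b\Delta t$.

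Next I would recognise the remaining summand as the cumulant generating function of the centred jump variable $Z=(L-2l+1)\alpha(\Delta t)$, whose mean vanishes by the martingale condition imposed on it just before (\ref{updown}). Its cumulants are pinned down by the moment-matching system: condition (\ref{moment_matching}) forces the central moments of $Z$ to equal $\mu_2,\mu_3,\mu_4$, and converting these into cumulants through (\ref{moment_cumulants}) gives $\kappa_2(Z)=c_2$, $\kappa_3(Z)=c_3$ and $\kappa_4(Z)=\mu_4-3\mu_2^2=c_4$. Feeding these into the standard cumulant expansion yields $\log\bigl(\sum_{l=1}^{5}p_l\,e^{iu(L-2l+1)\alpha}\bigr)=-\tfrac{c_2u^2}{2}-\tfrac{ic_3u^3}{3!}+\tfrac{c_4u^4}{4!}+\mathcal{O}(u^5)$, so that $H_{\Delta Y}(u)$ matches (\ref{cum_gen_appr}) exactly through fourth order. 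Hence the two increments agree in all cumulants up to order four, which is the content of the statement.

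The main obstacle is conceptual rather than computational: one must be precise about the sense in which the two laws are ``the same''. The first four cumulants coincide by the very construction of the probabilities (\ref{probabilities2}) and of $\alpha(\Delta t)$ in (\ref{alphat}), but a five-point law also has all of its higher cumulants determined, so I expect the $\mathcal{O}(u^5)$ remainders of $H_{\Delta Y}$ and $H_{\Delta X^{A}}$ to differ in general; the equality of distributions should therefore be read as holding through the fourth cumulant, which is exactly the order preserved by the moment-matching scheme and the order that survives in the $\Delta t\to 0$ limit towards the compound Poisson process. Two pieces of bookkeeping deserve care: checking the moment–cumulant conversion $\mu_4=c_4+3c_2^2$ so that $\kappa_4(Z)=c_4$ and not $\mu_4$, and confirming that the choice (\ref{alphat}) indeed renders every $p_l$ in (\ref{probabilities2}) nonnegative and summing to one, so that $H_{\Delta Y}$ is a genuine cumulant generating function and the term-by-term comparison with (\ref{cum_gen_appr}) is legitimate.
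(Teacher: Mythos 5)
Your proposal follows essentially the same route as the paper: write $H_{\Delta Y}(u)=iub\Delta t+\log\bigl(\E[e^{iu(L-2l+1)\alpha(\Delta t)}]\bigr)$, invoke the moment-matching condition (\ref{moment_matching}) to identify the central moments with $\mu_k$, convert to cumulants via (\ref{moment_cumulants}), and conclude that $H_{\Delta Y}$ agrees with (\ref{cum_gen_appr}) through fourth order. Your closing caveat --- that the agreement is only up to the $\mathcal{O}(u^5)$ remainder, so ``same distribution'' should be read as equality of the first four cumulants --- is apt and in fact makes explicit a point the paper's own proof leaves implicit.
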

\begin{proof}
The idea of the proof is to show that the cumulant generating function of the discrete process (\ref{Discr_Y}) 
coincides with that of the approximated process (\ref{cum_gen_appr}). We prove it using the moment
matching condition (\ref{moment_matching}).
\begin{align}
H_{\Delta Y}(u) &= \log \bigl( \phi_{\Delta Y}(u)  \bigr) = \log \biggl( \E \bigl[ e^{iu \Delta Y} \bigr] \biggr) \\ \nonumber
                &= \log \biggl( \E \biggl[ e^{iu \bigl( \bar b \Delta t + (L-2l+1) \alpha(\Delta t) \bigr) } \biggr] \biggr) \\ \nonumber
		&= iu \bar b \Delta t + \log \biggl( \E \biggl[ e^{iu \bigl(  (L-2l+1) \alpha(\Delta t) \bigr) } \biggr] \biggr).
\end{align}
We can expand the exponential function in Taylor series and use the moment matching condition (\ref{moment_matching}) to obtain:
\begin{align}
H_{\Delta Y}(u) &= iu \bar b \Delta t + \log \biggl( \sum_{k=0}^{\infty} \frac{(iu)^k}{k!} 
\bigl(\alpha(\Delta t)\bigr)^k \E \biggl[ \bigl( L-2l+1  \bigr)^k \biggr] \biggr) \\ \nonumber
                &= iu \bar b \Delta t + \log \biggl( \sum_{k=0}^{\infty} \frac{(iu)^k}{k!} \mu_k \biggr) \\ \nonumber
                &= iu c_1 + \sum_{k=0}^{\infty} \frac{(iu)^k}{k!} c_k \\ \nonumber 
                &= H_{\Delta X^{A}}(u),
\end{align}
\end{proof}
\begin{remark}
All the cumulants of $\Delta X^A$ are equal to the cumulants of the Markov chain (\ref{Discr_Y}) by construction, but only the first four are equal to the VG cumulants.
When all the cumulants $c_i$, for $0 \leq i \leq n$, are equal to the VG cumulants, the approximated process $X^A$ converges to
the original VG process for $n \to \infty$.
To control $n$ cumulants, we need $n+1$ branches. Therefore, when the number of cumulants of $\Delta X^A$ equal to those of the VG process goes to infinity, 
the number of branches have to go to infinity as well.
We assume that five branches ($L=5$) are enough to describe the features of the underlying process and, at the same time, keep the numerical
problem simple. 
\end{remark}

\begin{theorem}
The distribution of the pentanomial tree at time $N$ converges to the distribution of a compound Poisson process at time $N$ with $L=5$ possible jump sizes and activity $\lambda = \frac{3}{2 \bar \kappa N}$, when $\Delta t \to 0$.   
\end{theorem}
For the proof of this theorem  
we refer to Section 4.2 of \cite{YaPr06}. The authors first define the jump sizes and their respective probabilities, and then
prove that when $\Delta t \to 0$ the characteristic function of the pentanomial tree converges to the 
characteristic function of the compound Poisson process.

\section{Numerical results} \label{sec4}

In this section we present the steps to implement the algorithm for pricing European and American options with the multinomial method.
Then we compare the results with those obtained by the PIDE method and Black-Scholes model.

\subsection{Algorithm}

We suggest the following algorithm for pricing with the multinomial method:
\begin{enumerate}
 \item Compute the transition probabilities vector (\ref{probabilities2}). 
 \item Compute the up/down factors $u$ and $d$ (\ref{updown}) and the vector of prices $S_N$ at terminal time $N$ as in Eq. (\ref{Discr_S2}).
 \item Evaluate the payoff of the option $V^N(S_N)$ at terminal time $N$.
 \item Given the option's values at time $t_{n+1}$ compute the values at time $t_n$. The value is the conditional expectation:
 \begin{equation}
 V^n(s^{(k)}_n) = e^{-r\Delta t} \E^{\Q} \biggl[ V^{n+1}(S_{n+1}) \bigg| S^{(k)}_n = s^{(k)}_n \biggr]. 
\end{equation}
 \item If computing the price of an American option, the value at the previous time level is the maximum between the conditional expectation and
 the intrinsic value of the option. For an American put we have:
 \begin{equation}
 V^n(s^{(k)}_n) = \max \biggr \{ e^{-r\Delta t} \E^{\Q} \biggl[ V^{n+1}(S_{n+1}) \bigg| S^{(k)}_n = s^{(k)}_n \biggr] , K-s^{(k)}_n \biggr \}. 
\end{equation}	
 \item Iterate the algorithm until the initial time $t_0$. 
\end{enumerate}
In the parameters calibration procedure, sometimes it is common to estimate first the historical parameters, and use them as initial guess for the
least squares minimization that recovers the risk neutral parameters.
In \cite{Se04} are presented several methods for historical parameters estimation of the VG density. We use the simple method of moments to estimate the 
parameters for the data in Fig. \ref{FigPDF}.
In all future calculations we consider the risk neutral VG parameters in Table \ref{sample-table}.

\begin{table}[!h]
\tbl{$r$ is the risk free interest rate and $\theta, \sigma, \kappa$ are the risk neutral VG parameters.}
{\begin{tabular}{lccc} \toprule
$r$ & $\theta$ & $\sigma$ & $\kappa$ \\ \midrule
0.06 & -0.1 & 0.2 & 0.2 \\ \bottomrule
\end{tabular}}
\label{sample-table}
\end{table}

\subsection{European options}

We compare the numerical results for European call and put options obtained with the multinomial and the PIDE approaches.
\begin{itemize}
 \item \emph{VG PIDE}: We solve the VG PIDE following the method proposed by \cite{CoVo05b}. The L\'{e}vy measure is singular in the origin
 and this is a problem for the computation of the integral term in (\ref{VG_PIDE}). 
 Fixing a truncation parameter $\epsilon >0$, we approximate the infinite activity martingale jump component with sizes smaller than $\epsilon$, with a Brownian motion with same variance.
 The resulting approximated PIDE has the ``jump-diffusion-like'' form:
 \begin{align}\label{VG_JD}
&  \frac{\partial V(t,x)}{\partial t} +
 \bigl( r-\frac{1}{2}\sigma_{\epsilon}^2 - w_{\epsilon} \bigr) \frac{\partial V(t,x)}{\partial x} 
 + \frac{1}{2}\sigma_{\epsilon}^2 \frac{\partial^2 V(t,x)}{\partial x^2} \\ \nonumber
 &+ \int_{|z| \geq \epsilon} V(t,x+z) \nu(dz) = (\lambda_{\epsilon} + r) V(t,x).
\end{align}
 where we introduced the parameters $\sigma_{\epsilon}^2 = \int_{|z| < \epsilon} z^2 \nu(dz)$, $\omega_{\epsilon} = \int_{|z| \geq \epsilon} (e^z-1) \nu(dz)$ and
 $\lambda_{\epsilon} = \int_{|z| \geq \epsilon} \nu(dz)$. More details are in \cite{Cont}.
 We solve the PIDE (\ref{VG_JD}) using the implicit-explicit finite difference scheme proposed in \cite{CoVo05b}, and choosing the truncator parameter $\epsilon = 1.5 \delta x$, 
 where $\delta x$ is the size of 
 the space step. It turns out that the solution of the discretized equation convergences very slowly to the option price, and therefore we required a grid with $14000$ space steps 
 and $7000$ time steps. The algorithm is written in Matlab and runs
 on an Intel i7 (7th Gen) with Linux. It takes about 30 minutes to complete. 
 \item \emph{Multinomial}: We follow the algorithm proposed in the previous section. The number of time steps for all the computations is $N=2000$. In the table \ref{Convergence} 
 we show a convergence table for the prices of European calls, puts and American puts. It is straightforward to see that the convergence is quite fast.  
\end{itemize}
\begin{table}[!h]
\tbl{Convergence table for ATM European and American options with strike $K=40$ and $T=1$. The time unit is in seconds.}
{\begin{tabular}{l|ccc|cc} \toprule
  $N$ & Eu. Call & Eu. Put & Time & Am. Put & Time \\
  \hline
    50 & 4.41873125 & 2.08928091 & 0.001 & 2.36765911 & 0.007 \\
    100 & 4.41960265 & 2.09015381 & 0.002 & 2.37255454 & 0.02 \\
    200 & 4.41997010 & 2.09052201 & 0.004 & 2.37480218 & 0.07 \\
    400 & 4.42013640 & 2.09068869 & 0.01 & 2.37587117 & 0.29 \\
    800 & 4.42021515 & 2.09076762 & 0.03 & 2.37639131 & 1.09 \\
    1000 & 4.42023054 & 2.09078306 & 0.04 & 2.37649417 & 1.67 \\
    1500 & 4.42025089 & 2.09080345 & 0.06 & 2.37663070 & 3.79 \\
    2000 & 4.42026098 & 2.09081357 & 0.10 & 2.37669869 & 6.80 \\
    2500 & 4.42026701 & 2.09081962 & 0.16 & 2.37673941 & 10.65 \\
    3000 & 4.42027102 & 2.09082364 & 0.2 & 2.37676652 & 14.78 \\
  \hline
\end{tabular}}
\label{Convergence}
\end{table}
Figures (\ref{figCall}) and (\ref{figPut}) show the prices obtained by the multinomial method compared with those obtained by PIDE.
In table \ref{Option_values} we compare directly the call/put numerical values obtained with the two methods.
\begin{figure}[t!]
 \centering
 \includegraphics[scale=0.5]{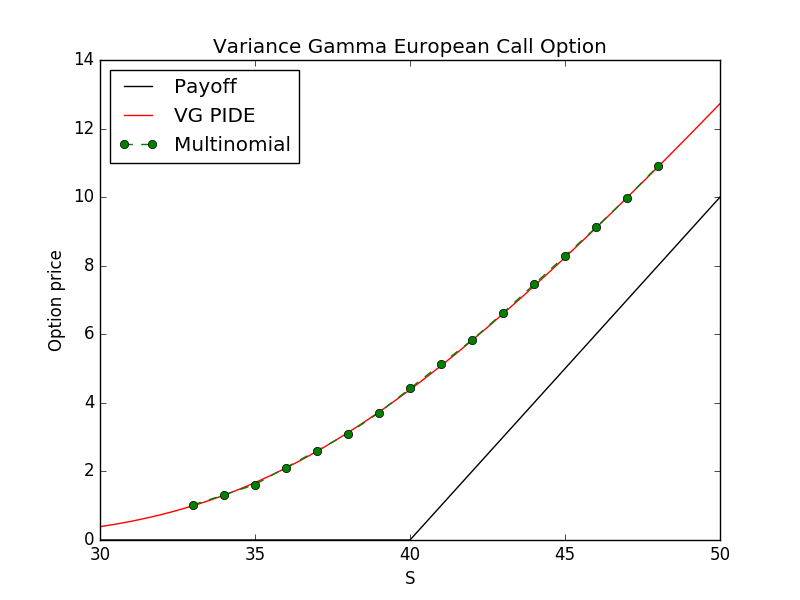}
 \caption{European call option with strike $K=40$ and time to maturity 1 year.}
 \label{figCall}
\end{figure}
\begin{figure}[t!]
 \centering
 \includegraphics[scale=0.5]{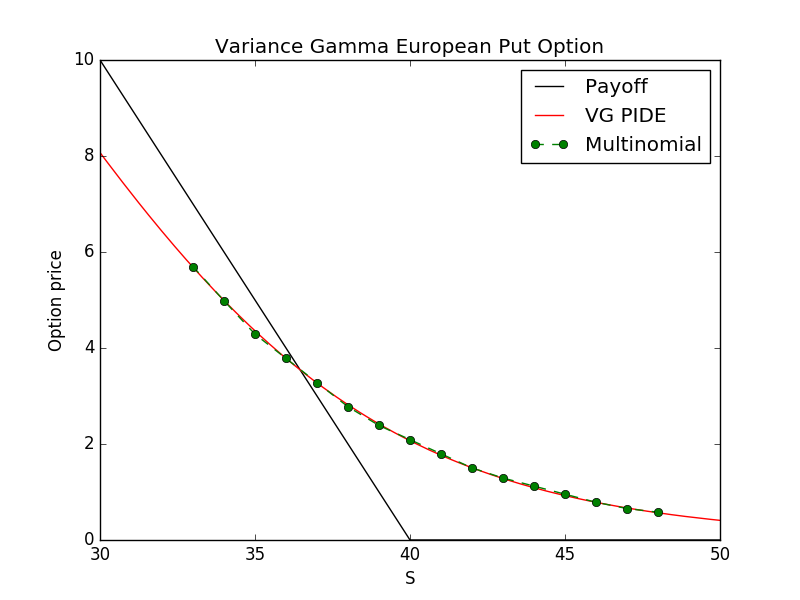}
 \caption{European put option with strike $K=40$ and time to maturity 1 year.}
 \label{figPut}
\end{figure}

Pricing vanilla call and put European options is quite simple and the best approach is to use the closed formula derived in \cite{MCC98}.
The big advantage of the multinomial method is in the computation of American options prices, where there is no closed formula and all the other approaches, 
such as PIDEs and Least Squares Monte Carlo, are difficult to implement and much slower.
\begin{table}[h!]
\tbl{European Options, with strike $K=40$ and $T=1$.}
{\begin{tabular}{lcccc} \toprule
& \multicolumn{2}{l}{Different methods} \\ \cmidrule{2-5}
  $S_0$ & PIDE Call & Multi Call & PIDE Put & Multi Put \\ \midrule
  36 & 2.1036 & 2.1131 & 3.7842 & 3.7837  \\
  38 & 3.1163 & 3.1051 & 2.7893 & 2.7756 \\
  40 & 4.4162 & 4.4203 & 2.0852 & 2.0908 \\
  42 & 5.8335 & 5.8309 & 1.5050 & 1.5014 \\
  44 & 7.4417 & 7.4524 & 1.1132 & 1.1229 \\ \bottomrule
\end{tabular}}
\label{Option_values}
\end{table}

\subsection{American options}

In this section we present the numerical results obtained with the multinomial method algorithm for American put options, and compare them with the PIDE method (see fig. \ref{AmVG}).
The PIDE (\ref{VG_JD}) is modified in order to take in account the early exercise feature:
\begin{align}\label{VG_Am_JD}
&  \min \biggl\{ - \frac{\partial V(t,x)}{\partial t} -
 \bigl( r-\frac{1}{2}\sigma_{\epsilon}^2 - w_{\epsilon} \bigr) \frac{\partial V(t,x)}{\partial x} 
 - \frac{1}{2}\sigma_{\epsilon}^2 \frac{\partial^2 V(t,x)}{\partial x^2} + (\lambda_{\epsilon} + r) V(t,x) \\ \nonumber
 &- \int_{|z| \geq \epsilon} V(t,x+z) \nu(dz) \, , \, \biggl( V(t,x) - (K-e^x)^+ \biggr) \biggr\} = 0.
\end{align}
To solve this equation we use the same settings used for the Eq. (\ref{VG_JD}): an implicit-explicit finite difference scheme, with 14000 space steps and 7000 time steps. 
The algorithm takes about 33 minutes to run.

The numerical values of the prices obtained with the multinomial and PIDE methods are collected in Tab. \ref{Option_values3}.
The run times for the multinomial algorithm are shown in the convergence table \ref{Convergence}.
\begin{figure}[ht!]
 \centering
 \includegraphics[scale=0.5]{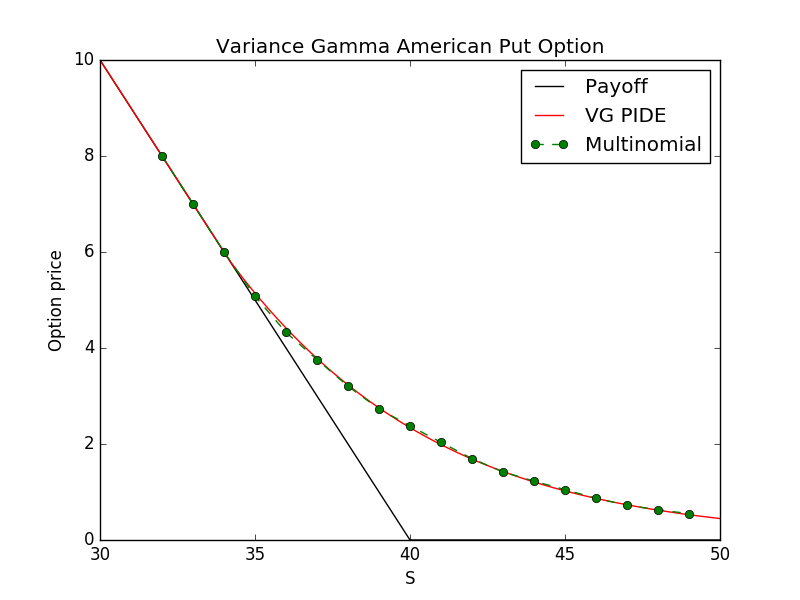}
 \caption{American put option with strike $K=40$ and time to maturity 1 year. Comparison of PIDE prices and multinomial prices.}
 \label{AmVG}
\end{figure}
\begin{table}[!h]
\tbl{Values for European and American put options using Black-Scholes and Variance Gamma model. Strike $K=40$ and $T=1$. The BS volatility have same value of the VG volatility:  
$ \sigma^{BS} = (\sigma^2 + \theta^2 \kappa) = 0.2049$. }
{\begin{tabular}{lrrrrr} \toprule
& \multicolumn{4}{r}{Prices comparison} \\ \cmidrule{2-6}
  $S_0$ & BS Eu. Put & VG Eu. Put & BS Am. Put & VG Am. Put & VG Am. PIDE Put \\ \midrule
  30 & 8.1316 & 8.0809 & 10     & 10     & 10 \\
  32 & 6.5292 & 6.4055 & 8      & 8      & 8 \\
  34 & 5.1169 & 4.9851 & 6.0894 & 6      & 6 \\
  36 & 3.9150 & 3.7837 & 4.5415 & 4.3173 & 4.3982 \\
  38 & 2.9263 & 2.7756 & 3.3264 & 3.2034 & 3.2195 \\
  40 & 2.1388 & 2.0908 & 2.3924 & 2.3767 & 2.3531 \\
  42 & 1.5322 & 1.5014 & 1.6911 & 1.6947 & 1.6849 \\
  44 & 1.0766 & 1.1229 & 1.1755 & 1.2267 & 1.2118 \\
  46 & 0.7433 & 0.7858 & 0.8043 & 0.8699 & 0.8650 \\
  48 & 0.5049 & 0.5787 & 0.5425 & 0.6310 & 0.6221 \\
  50 & 0.3384 & 0.4259 & 0.3612 & 0.4661 & 0.4480 \\ 
  52 & 0.2238 & 0.3015 & 0.2376 & 0.3242 & 0.3222 \\
  55 & 0.1178 & 0.1909 & 0.1243 & 0.2051 & 0.1990 \\
  60 & 0.0386 & 0.0880 & 0.0404 & 0.0942 & 0.0913 \\ \bottomrule
  \end{tabular}}
  \label{Option_values3}
\end{table}

In Table \ref{Option_values3}, we consider also European and American put option prices 
calculated with the Black-Scholes (BS) models.
The BS volatility is chosen equal to the VG volatility $ \sigma^{BS} = (\sigma^2 + \theta^2 \kappa)$.
As expected, the deep OTM (out of the money) prices computed under VG are higher than the corresponding prices computed under BS. This is a consequence of the
shape of the VG density function (\ref{pdf_VG}), which has \emph{heavier tails} than the normal distribution. This means that the probability of a deep OTM option to return in the money, 
is higher if calculated with the VG model than BS, and therefore we get higher option prices.

The Black-Scholes prices are computed using a binomial algorithm. For European options the prices converge to the BS closed formula prices. 
The same values can be obtained using the multinomial algorithm for the VG process and setting $\theta = \kappa = 0$ and $\sigma = \sigma^{BS}$.
Recall that under the Black-Scholes model, the log-returns follow a Brownian motion. 
Looking at the definition of the VG process (\ref{VG_process}), it is evident that when $\theta$ and $\kappa$ are zero, the process becomes a Brownian motion:
$$ X^{VG}_t \underset{\theta,\kappa \to 0}{\to} \sigma W_t. $$
As a consequence, the price process (\ref{S_pprocess}) converges to the Geometric Brownian Motion:
$$ S_t = S_0 e^{(r+\omega)t + X_t} \underset{\theta,\kappa \to 0}{\to} S_0 e^{(r-\frac{1}{2}\sigma^2)t + \sigma W_t} $$
where:
\begin{align*}
 \lim_{\theta,\kappa \to 0} w &= \lim_{\theta,\kappa \to 0} \frac{1}{\kappa} \log(1-\theta \kappa -\frac{1}{2}\sigma^2 \kappa) \\
 &= -\frac{1}{2}\sigma^2.
\end{align*}

\section{Finite difference approximation} \label{sec5}

Consider the VG PIDE (\ref{VG_PIDE}):
\begin{equation}
 \frac{\partial V(t,x)}{\partial t} + (r+\omega) \frac{\partial V(t,x)}{\partial x}
 + \int_{\R} \bigl[ V(t,x+y) - V(t,x) \bigr] \nu(dy) = rV(t,x). 
\end{equation}
We can expand $V(t,x+y)$ using the Taylor formula up to the fourth order:
\begin{align}
 V(t,x+y) &= V(t,x) + \frac{\partial V(t,x)}{\partial x} y + \frac{1}{2} \frac{\partial^2 V(t,x)}{\partial x^2} y^2 \\ \nonumber
  &+ \frac{1}{6} \frac{\partial^3 V(t,x)}{\partial x^3} y^3 + \frac{1}{24} \frac{\partial^4 V(t,x)}{\partial x^4} y^4 
\end{align}
and use the expression for the cumulants (see Appendix \ref{App1}). We denote with $\tilde c_n$ the cumulant evaluated at $t=1:$
\begin{equation}
 \tilde c_n = \int_{\R} y^n \nu(dy). 
\end{equation}
The approximated equation is a fourth order PDE:
\begin{align}
 & \frac{\partial V(t,x)}{\partial t} + (r+\omega+\tilde c_1) \frac{\partial V(t,x)}{\partial x}
 + \frac{1}{2} \tilde c_2 \frac{\partial^2 V(t,x)}{\partial x^2} \\ \nonumber
 &+ \frac{1}{6} \tilde c_3 \frac{\partial^3 V(t,x)}{\partial x^3} + \frac{1}{24} \tilde c_4 \frac{\partial^4 V(t,x)}{\partial x^4} 
 = rV(t,x). 
\end{align}
Consider the variable $x$ in the interval $[x_{min},x_{max}]$ and   
discretize time and space, such that $h = \Delta x = \frac{x_{max} - x_{min}}{N} $ and $\Delta t = \frac{T-t_0}{M}$ 
for $N,M \in \N$.
Using the variables $x_i = x_{min} + ih$ for $i=0,...,N$ and $t_n= t_0 + n\Delta t$ for $n=0,...,M$, we use the short notation
$$V(t_n,x_i) = V^n_i.$$
We can use the following discretization for the time derivative, corresponding to an explicit method:
\begin{equation}
 \frac{\partial V(t,x)}{\partial t} \approx \frac{V^{n+1}_i - V^n_i}{\Delta t}, 
\end{equation}
and the central difference for the spatial derivative:
\begin{align}
 \frac{\partial V(t,x)}{\partial x} \approx& \; \frac{V^{n+1}_{i + h} - V^{n+1}_{i-h}}{2h}, \\ \nonumber 
 \frac{\partial^2 V(t,x)}{\partial x^2} \approx& \; \frac{V^{n+1}_{i + h} + V^{n+1}_{i - h} - 2V^{n+1}_{i}}{h^2}, \\ \nonumber 
 \frac{\partial^3 V(t,x)}{\partial x^3} \approx& \; \frac{ V^{n+1}_{i + 2h} - V^{n+1}_{i - 2h} 
 +2V^{n+1}_{i - h} - 2V^{n+1}_{i + h} }{2h^3}, \\ \nonumber
 \frac{\partial^4 V(t,x)}{\partial x^4} \approx& \; \frac{ V^{n+1}_{i - 2h} + V^{n+1}_{i + 2h} -4V^{n+1}_{i - h} - 
 4V^{n+1}_{i + h} + 6V^{n+1}_{i}}{h^4}. \\ \nonumber
\end{align}
The discretized equation is:
\begin{align}
 &\biggl( \frac{V^{n+1}_i - V^n_i}{\Delta t} \biggr) 
 + (r+\omega+\tilde c_1) \biggl( \frac{V^{n+1}_{i + h} - V^{n+1}_{i-h}}{2h} \biggr)\\ \nonumber
 &+ \frac{1}{2} \tilde c_2 \biggl( \frac{V^{n+1}_{i + h} + V^{n+1}_{i - h} - 2V^{n+1}_{i}}{h^2} \biggr)
 + \frac{1}{6} \tilde c_3 \biggl( \frac{ V^{n+1}_{i + 2h} - V^{n+1}_{i - 2h} 
 +2V^{n+1}_{i - h} - 2V^{n+1}_{i + h} }{2h^3} \biggr) \\ \nonumber		
 &+ \frac{1}{24} \tilde c_4 \biggl( \frac{ V^{n+1}_{i - 2h} + V^{n+1}_{i + 2h} -4V^{n+1}_{i - h} - 
 4V^{n+1}_{i + h} + 6V^{n+1}_{i}}{h^4} \biggr) 
 = rV^n_i .
\end{align}
Rearranging the terms we obtain:
\begin{align}
(1+r\Delta t) V^n_i =& V^{n+1}_{i + h} \biggl[ \frac{(r+\omega+\tilde c_1)\Delta t}{2h} + \frac{\tilde c_2 \Delta t}{2h^2} 
-\frac{\tilde c_3 \Delta t}{6h^3} - \frac{\tilde c_4 \Delta t}{6h^4} \biggr] \\ \nonumber
+& V^{n+1}_{i - h} \biggl[ \frac{-(r+\omega+\tilde c_1)\Delta t}{2h} + \frac{\tilde c_2 \Delta t}{2h^2} 
+\frac{\tilde c_3 \Delta t}{6h^3} - \frac{\tilde c_4 \Delta t}{6h^4} \biggr] \\ \nonumber
+& V^{n+1}_{i + 2h} \biggl[ \frac{\tilde c_3 \Delta t}{12h^3} + \frac{\tilde c_4 \Delta t}{24h^4} \biggr] \\ \nonumber
+& V^{n+1}_{i - 2h} \biggl[ -\frac{\tilde c_3 \Delta t}{12h^3} + \frac{\tilde c_4 \Delta t}{24h^4} \biggr] \\ \nonumber
+& V^{n+1}_{i} \biggl[ 1 - \frac{\tilde c_2 \Delta t}{h^2} + \frac{\tilde c_4 \Delta t}{4h^4} \biggr].
\end{align}
If we rename the coefficients, the equation is:
\begin{equation}\label{prob_coeff}
 (1+r\Delta t) V^n_i = V^{n+1}_{i + h} p_{+h} + V^{n+1}_{i - h} p_{-h} +
  V^{n+1}_{i + 2h} p_{+2h} + V^{n+1}_{i - 2h} p_{-2h} + V^{n+1}_{i} p_{0}. 
\end{equation}
The coefficients can be interpreted as the (risk neutral) transition probabilities for the Markov chain:
\[
X(t_{n+1}) = \begin{dcases}
         X(t_{n}) + 2h & \mbox{with  } \PP(x_i \to x_i +2 h) = p_{+2h} \\
         X(t_{n}) + h & \mbox{with  } \PP(x_i \to x_i + h) = p_{+h} \\
	 X(t_{n})     & \mbox{with  } \PP(x_i \to x_i) = p_{0}   \\
	 X(t_{n}) - h & \mbox{with  } \PP(x_i \to x_i - h) = p_{-h} \\
         X(t_{n}) + 2h & \mbox{with  } \PP(x_i \to x_i -2 h) = p_{-2h}
        \end{dcases} 
\]
It is straightforward to verify that $p_{-2h} + p_{-h} + p_{0} + p_{+h} + p_{+2h}  = 1$.
The space step $h$ has to to be chosen in order to satisfy the positivity condition of the transition probabilities, $p_{jh} > 0$ for $j=-2,-1,0,1,2$.
The value of the option in the previous time step is thus the discounted expectation 
under the risk neutral probability measure $\Q$:
\begin{equation}
 V^n_i = \frac{1}{1+r\Delta t} \E^{\Q} \biggl[ V^{n+1}(X(t_{n+1})) \bigg| X(t_n)=x_i \biggr]. 
\end{equation}
Define the increments $\Delta X = X(t_{n+1}) - X(t_n)$.
We check that the local properties for the moments of the Markov chain are satisfied:
\begin{align}
 \mu' =& \; \E[\Delta X] =  \bigl( r+\omega+\tilde c_1 \bigr) \Delta t \\
 \mu'_2 =& \, \E[\Delta X^2] =  \tilde c_2 \Delta t \\
 \mu'_3 =& \, \E[\Delta X^3] =  \bigl( (r+\omega+\tilde c_1)h^2 + \tilde c_3 \bigr) \Delta t \\
 \mu'_4 =& \,\E[\Delta X^4] = \bigl(\tilde c_2 h^2 + \tilde c_4\bigr) \Delta t.
\end{align}
At first order in $\Delta t$ we can calculate the variance, skewness and kurtosis\footnote{Remind that $\mbox{Skew}[X]=\frac{\mu_3}{\mu_2^{3/2}}$
and $\mbox{Kurt}[X]=\frac{\mu_4}{\mu_2^{2}}$, with $\mu_i$ the central i-th moment. Remind also that $\mu_3 = \mu'_3-3\mu'\mu'_2
+2\mu'^3$ and $\mu_4 = \mu'_4 -4\mu'\mu'_3 +6\mu'^2\mu'_2 -3\mu'^4$ } :
\begin{align}
 \mbox{Var}[\Delta X] &\approx \tilde c_2 \Delta t \\
 \mbox{Skew}[\Delta X] &\approx \frac{(r+\omega+\tilde c_1)}{(\tilde c_2)^{3/2}} 
 \frac{h}{\sqrt{\Delta t}} + \frac{\tilde c_3}{(\tilde c_2)^{3/2}} \frac{1}{\sqrt{\Delta t}} \\
 \mbox{Kurt}[\Delta X] &\approx \frac{h^2}{\tilde c_2}
 \frac{1}{\Delta t} + \frac{\tilde c_4}{(\tilde c_2)^2} \frac{1}{\Delta t}.
\end{align}
For $h\to 0$, we confirm that the local variance,
skewness and kurtosis are consistent with their definition in terms of cumulants.

We expect that the transition probabilities in (\ref{probabilities2}) obtained by moment matching, can be recovered from the probabilities in
(\ref{prob_coeff}) obtained by finite difference discretization, for a particular choice of the space step $h$. 
We have not solved this issue yet, and we think it can be a topic for further research. 

\section{Conclusions} \label{sec6}

This article proposes a method to price options using a multinomial method when the underlying price is modeled with a Variance Gamma 
process.
The multinomial method is well known in the literature, see for example 
\cite{Cont}, \cite{KeWe06} ,\cite{YaPr01}, \cite{YaPr03} and \cite{YaPr06}, but in this work we focus the analysis only on the VG process 
and compare our numerical results with other different methods. 

The VG process is approximated by a general jump process that has the same first four cumulants of the original VG process. 
We proved that the multinomial method converges to this approximated process.
We obtained numerical results for European and American options, and compared them with PIDE methods and with results computed within
the Black Scholes framework. 
It turns out that the multinomial method is easier to implement than the finite differences method. The algorithm does not involve any matrix multiplication or matrix
inversion as in the case of implicit/explicit method for PIDEs. This means that the computational time is much smaller. 

We proposed a topic of further investigation in Section \ref{sec5}. The probabilities obtained by discretizing the approximated PDE are
related with the probabilities obtained by moment matching for a particular choice of the space step parameter.  
Another possible topic of further research is the comparison of our results for American options 
with other numerical methods such as the Least Square Monte Carlo (\cite{LoSc01}).

\section*{Acknowledgements}
Our sincere thanks are for the Department of Mathematics of ISEG and CEMAPRE, University of Lisbon, 
\url{http://cemapre.iseg.ulisboa.pt/}.
This research was supported by the European Union in the FP7-PEOPLE-2012-ITN project STRIKE - 
Novel Methods in Computational Finance (304617), and by CEMAPRE
MULTI/00491, financed by FCT/MEC through Portuguese national funds.
We wish also to acknowledge all the members of the STRIKE network, \url{http://www.itn-strike.eu/}.

\appendix
\section{Poisson integration}\label{App1}

A convenient tool to analyze the jumps of a Lévy process is the random
measure of jumps. Within this formalism it is possible to describe jump processes with infinite activity, as the VG process.
The jump process associated to the Lévy process $X_t$ is
defined, for each $0\leq t\leq T$ , by:
\begin{equation}\label{jump}
 \Delta X_t = X_t - X_{t-}
\end{equation}
where $X_{t-} = \lim_{s\uparrow t} X_s $. 
Consider the set $A \in \mathcal{B}(\R \backslash \{ 0 \})$ ,
the \emph{random measure} of the jumps of $X_t$ is defined by:
\begin{align}
 N(t,A)(\omega) &= \# \{ \Delta X_s(\omega) \in A \; : \; 0\leq s \leq t  \} \\
		   &= \sum_{s\leq t} 1_A(\Delta X_s(\omega)) . \nonumber
\end{align} 
This measure counts the number of jumps of size in $A$, up to time $t$.
We say that $A\in \mathcal{B}(\R \backslash \{ 0 \})$ is \emph{bounded below} if $0 \not \in \bar A$ (zero does not belong to the closure of $A$). 
If $A$ is bounded below, then $N(t,A) < \infty$ and is a Poisson process with intensity 
 \begin{equation}
 \nu(A) = \mathbb{E}[N(1,A) ] ,
 \end{equation}
 (see \cite{Applebaum} 
 theorem 2.3.4 and 2.3.5). 
 If $A$ is not bounded below, it is possible to have $\nu(A) = \infty$ and $N(t,A)$ is not a Poisson process because of the accumulation of infinite numbers of
 small jumps.
 The process $N(t,A)$ is called \emph{Poisson random measure}. The Lévy measure corresponds to the intensity 
 of the Poisson measure.
 The \emph{Compensated Poisson random measure} is defined by
 \begin{equation}
  \tilde{N}(t,A) = N(t,A) - t\nu(A), 
 \end{equation}
which is a martingale.

The next step is to define the integration with respect to a random measure.
Following \cite{Applebaum}, let $f:\R \to \R$ be a Borel-measurable
function. For any $A$ bounded below, we define the \emph{Poisson integral} of $f$ as:
 \begin{equation}\label{Poisson_int}
  \int_A f(x) N(t,dx)(\omega) = \sum_{x\in A} f(x) N(t,\{x\})(\omega). 
 \end{equation}
For the case of integration of the identity function, we see that every compound Poisson process can be represented by:
\begin{equation}\label{compound_P}
 X_t = \sum_{s\in [0,t]} \Delta X_s = \int_0^t \int_{\R} x N(dt,dx) = \int_{\R} x N(t,dx).
\end{equation}
We can also define in the same way the \emph{compensated Poisson integral} with respect the compensated Poisson measure.
We can further define:
\begin{equation}
\int_{|x|<1} x \tilde N(t,dx) = \lim_{\epsilon \to 0} \int_{\epsilon < |x| < 1} x \tilde N(t,dx), 
\end{equation}
that represent the compensated sum of small jumps.

We present a last formula for computing the moments of a general compound Poisson process.
Let $f: \R \to \R$ be a measurable function such that $\int_A |f(x)| \nu(dx) < \infty$, and $ X_t = \int_A f(x) N(t,dx)$, the characteristic function of $X_t$ is:
 \begin{align}\label{Exp_poiss}
 \E[ e^{iuX_t} ] &= 
  \E \left[ \exp \left( i u \int_A f(x) N(t,dx)) \right) \right] \\ \nonumber
  &= \exp \left( t \int_{\R} \bigl[ e^{iu f(x)}-1 \bigr] \; \nu \bigl( A \cap f^{-1}(dx)\bigr) \right).
 \end{align}
Assuming that $\E[X_t^n] < \infty$, all the moments can be computed from (\ref{Exp_poiss}) by differentiation using eq:
\begin{equation}\label{moments}
 \E[X_t^{n}] = \frac{1}{i^n} \frac{\partial^n \phi_{X_t}(u)}{\partial u^{n}} \biggr|_{u=0}, \quad \forall n \in \N .
\end{equation}
For the case of $f$ identity function, $A=\R$, and $\nu$ satisfying the integrability conditions, using (\ref{cumulants}) we derive the expression for the cumulants: 
\begin{equation}\label{levy_cumulants}
 c_n = t \int_{\R} x^n \nu(dx). 
\end{equation}
The cumulants of $X_t$ are thus the moments of its Lévy measure. 

\section{Cumulants}\label{App2}

The cumulant generating function $H_{X_t}(u)$ of $X_t$ is defined as the natural logarithm of its characteristic function
(see \cite{Cont}). 
Using the Lévy-Khintchine representation for the characteristic function (\ref{LevyKin}), it is easy to find its relation with
the Lévy symbol:
\begin{align}
H_{X_t}(u) &= \log(\phi_{X_t}(u)) \\ \nonumber
           &= t \eta(u) \\ \nonumber
           &= \sum_{n=1}^{\infty} c_n \frac{(iu)^n}{n!}
\end{align}
The \emph{cumulants} of a Lévy process are thus defined by:
\begin{equation}\label{cumulants}
 c_n = \frac{t}{i^n} \frac{\partial^n \eta(u) }{\partial u^{n}} \biggr|_{u=0} .
\end{equation}
The cumulants are closely related to the central moments $\mu_n$: 
\begin{equation}\label{moment_cumulants}
 \mu_0 = 1,\hspace{1em} \mu_1=0, \hspace{1em} \mu_n=\sum_{k=1}^n \binom{n-1}{k-1}  c_k \mu_{n-k} \hspace{1em} \mbox{for } n>1.  
\end{equation}
For a Poisson process with finite firsts $n$ moments, all the information about the cumulants is contained inside the Lévy measure.
Expand in Taylor series the exponential 
$$ e^{iux} \approx 1 +iux - \frac{u^2x^2}{2} -\frac{iu^3x^3}{3!} +\frac{u^4x^4}{4!} + \dots $$ 
The Lévy symbol from the representation (\ref{Character_Poisson}) becomes:
\begin{align}\label{cumulant_expansion}
 t \eta(u) &= ib'u t + t \int_{\R} (e^{iux} -1) \nu(dx) \\ \nonumber
          &= i\biggl( b-\int_{|x|<1} x \nu(dx) \biggr)ut +iu t\int_{\R} x \nu(dx) - \frac{u^2}{2}t\int_{\R}x^2 \nu(dx) \\ \nonumber 
          & \hspace{2em} -\frac{iu^3}{3!}t\int_{\R} x^3 \nu(dx) +\frac{u^4}{4!}t\int_{\R} x^4 \nu(dx) + \dots\\ \nonumber
	  &= ic_1 u -\frac{c_2u^2}{2} -\frac{ic_3u^3}{3!} + \frac{c_4u^4}{4!} + \dots 
\end{align}
with $c_1= t \bigl( b+\int_{|x|\geq 1} x \nu(dx) \bigr)$.

\bibliographystyle{apalike}


\begin{thebibliography}{}

\bibitem[Ait-Sahalia and Jacod, 2012]{Ait12}
Ait-Sahalia, Y. and Jacod, J. (2012).
\newblock Analyzing the spectrum of asset returns: Jump and volatility
  components in high frequency data.
\newblock {\em Journal of Economic literature}, 50(4):1007--1050.

\bibitem[Almendral, 2005]{Al05}
Almendral, A. (2005).
\newblock Numerical valuation of american options under the {CGMY} process.
\newblock {\em Eds., Exotic Option Pricing and Advanced Levy Models, Wiley,
  Hoboken.}

\bibitem[Almendral and Oosterlee, 2007]{Oo05}
Almendral, A. and Oosterlee, C.~W. (2007).
\newblock On {A}merican options under the {V}ariance {G}amma process.
\newblock {\em Applied Mathematical Finance}, 14(2):131--152.

\bibitem[Applebaum, 2009]{Applebaum}
Applebaum, D. (2009).
\newblock {\em Lévy Processes and Stochastic Calculus}.
\newblock Cambridge University Press; 2nd edition.

\bibitem[Barndorff-Nielsen, 1998]{BN98}
Barndorff-Nielsen (1998).
\newblock Processes of {N}ormal inverse {G}aussian type.
\newblock {\em Finance and Stochastics}, 2:41--68.

\bibitem[Black and Scholes, 1973]{BS73}
Black, F. and Scholes, M. (1973).
\newblock The pricing of options and corporate liabilities.
\newblock {\em The Journal of Political Economy}, 81(3):637--654.

\bibitem[Carr and Madan, 1998]{CaMa98}
Carr, P. and Madan, D. (1998).
\newblock Option valuation using the {F}ast {F}ourier transform.
\newblock {\em Journal of Computational Finance}, 2:61--73.

\bibitem[Cont and Tankov, 2003]{Cont}
Cont, R. and Tankov, P. (2003).
\newblock {\em Financial Modelling with Jump Processes}.
\newblock Chapman and Hall/CRC; 1 edition.

\bibitem[Cont and Voltchkova, 2005]{CoVo05b}
Cont, R. and Voltchkova, E. (2005).
\newblock A finite difference scheme for option pricing in jump diffusion and
  exponential {L}\'evy models.
\newblock {\em SIAM Journal of numerical analysis}, 43(4):1596--1626.

\bibitem[Cox et~al., 1979]{CRR79}
Cox, J., Ross, S., and Rubinstein, M. (1979).
\newblock Option pricing: A simplified approach.
\newblock {\em Journal of financial economics}, 7:229--263.

\bibitem[Eberlein and Keller, 1995]{EbKe95}
Eberlein, E. and Keller, U. (1995).
\newblock Hyperbolic distributions in finance.
\newblock {\em Bernoulli}, 1(3):281--299.

\bibitem[Fu, 2000]{Fu00}
Fu, M.~C. (2000).
\newblock {V}ariance-{G}amma and {M}onte {C}arlo.
\newblock {\em Advances in Mathematical Finance.}, pages 21--34.

\bibitem[Geman et~al., 1998]{GeMaYo98}
Geman, H., Madan, D., and Yor, M. (1998).
\newblock Asset prices are {B}rownian motion: only in business time.
\newblock {\em Quantitative Analysis in Financial Markets: Collected Papers of
  the New York University Mathematical Finance Seminar}, 2(1):103--146.

\bibitem[Hainaut and MacGilchrist, 2010]{HaMac10}
Hainaut, D. and MacGilchrist, R. (Winter 2010).
\newblock An interest tree driven by a l\'evy process.
\newblock {\em The Journal of Derivatives}, 18(2):33--45.

\bibitem[Hirsa and Madan, 2001]{HiMa01}
Hirsa, A. and Madan, D. (2001).
\newblock Pricing {A}merican options under {V}ariance {G}amma.
\newblock {\em Journal of Computational Finance}, 7.

\bibitem[Kellezi and Webber, 2006]{KeWe06}
Kellezi, E. and Webber, N. (2006).
\newblock Valuing {B}ermudan options when asset returns are {L}\'{e}vy
  processes.
\newblock {\em Quantitative Finance}, 4(1):87--100.

\bibitem[Longstaff and Schwartz, 2001]{LoSc01}
Longstaff, F.~A. and Schwartz, E.~S. (2001).
\newblock Valuing {A}merican options by simulation: {A} simple
  {L}east-{S}quares approach.
\newblock {\em Review of Financial Studies}, pages 113--147.

\bibitem[Madan et~al., 1998]{MCC98}
Madan, D., Carr, P., and Chang, E. (1998).
\newblock The {V}ariance {G}amma process and option pricing.
\newblock {\em European Finance Review}, 2:79–105.

\bibitem[Madan and Milne, 1991]{MaMi91}
Madan, D. and Milne, F. (1991).
\newblock Option pricing with {V.G.} martingale components.
\newblock {\em Mathematical Finance}, 1:39–55.

\bibitem[Madan and Seneta, 1990]{MaSe90}
Madan, D. and Seneta, E. (1990).
\newblock The {V}ariance {G}amma {(V.G.)} model for share market returns.
\newblock {\em The journal of Business}, 63(4):511--524.

\bibitem[Maller et~al., 2006]{MaSoSz}
Maller, R., Solomon, D., and Szimayer, A. (2006).
\newblock A multinomial approximation for american option prices in levy
  process models.
\newblock {\em Mathematical Finance}, 16(4):613--633.

\bibitem[Sato, 1999]{Sato}
Sato, K.~I. (1999).
\newblock {\em Lévy processes and infinitely divisible distributions}.
\newblock Cambridge University Press.

\bibitem[Seneta, 2004]{Se04}
Seneta, E. (2004).
\newblock Fitting the {V}ariance-{G}amma model to financial data.
\newblock {\em Journal of Applied Probability}, 41:177--187.

\bibitem[Ssebugenyi and Konlack, 2013]{See13}
Ssebugenyi, C.S.~Mwaniki, I. and Konlack, V. (2013).
\newblock On the minimal entropy martingale measure and multinomial lattices
  with cumulants.
\newblock {\em Applied Mathematical Finance}, 20(4):359--379.

\bibitem[Yamada and Primbs, 2001]{YaPr01}
Yamada, Y. and Primbs, J. (2001).
\newblock Construction of multinomial lattice random walks for optimal hedges.
\newblock {\em Proc. of the 2001 international conference of Computational
  Science}, pages 579--588.

\bibitem[Yamada and Primbs, 2003]{YaPr03}
Yamada, Y. and Primbs, J. (2003).
\newblock Mean square optimal hedges using higher order moments.
\newblock {\em Proc of the 2003 international conference on Computational
  Intelligence for Financial Engineering}.

\bibitem[Yamada and Primbs, 2004]{YaPr06}
Yamada, Y. and Primbs, J. (2004).
\newblock Properties of multinomial lattices with cumulants for option pricing
  and hedging.
\newblock {\em Asia-Pacific Financial Markets}, 11:335--365.

\end{thebibliography}

\end{document}